\let\oldtexttt\texttt
\renewcommand{\texttt}[1]{\mbox{\oldtexttt{#1}}}
\begin{document}

\title{Optimizing Quantum Compilation via\\High-Level Quantum Instructions}

\author{Evandro C. R. Rosa \and
    Jerusa Marchi \and\\
    Eduardo I. Duzzioni \and
    Rafael de Santiago}
\authorrunning{E. C. R. Rosa et al.}
\institute{Universidade Federal de Santa Catarina, Florianópolis, Brazil
    \email{evandro.crr@posgrad.ufsc.br} \email{\{jerusa.marchi,eduardo.duzzioni,r.santiago\}@ufsc.br}}

\maketitle        
\begin{abstract}
    Current quantum programming is dominated by low-level, circuit-centric approaches that limit the potential for compiler optimization. This work presents how a high-level programming construct provides compilers with the semantic information needed for advanced optimizations. We introduce a novel optimization that leverages a quantum-specific instruction to automatically substitute quantum gates with more efficient, approximate decompositions, a process that is transparent to the programmer and significantly reduces quantum resource requirements. Furthermore, we show how this instruction guarantees the correct uncomputation of auxiliary qubits, enabling safe, dynamic quantum memory management. We illustrate these concepts by implementing a V-chain decomposition of the multi-controlled NOT gate, showing that our high-level approach not only simplifies the code but also enables the compiler to generate a circuit with up to a 50\% reduction in CNOT gates. Our results suggest that high-level abstractions are crucial for unlocking a new class of powerful compiler optimizations, paving the way for more efficient quantum computation.

    \keywords{Quantum Computing \and Quantum Programming \and Compiler Optimization \and Ket}
\end{abstract}

\section{Introduction}

Quantum computers have the potential to solve problems that are intractable for classical computers. Although this has long been known~\cite{shorPolynomialTimeAlgorithmsPrime1997,groverFastQuantumMechanical1996}, the first demonstration of a quantum advantage was presented only in 2019~\cite{aruteQuantumSupremacyUsing2019}. Still, this demonstration solved a problem with no practical application. Quantum computing is an emerging technology where new developments in both software and hardware are required to enable its practical use.

On the software side, quantum programming is still based on low-level constructs such as qubits and quantum gates. Quantum programs are typically constructed by explicitly defining a quantum circuit, as is done with platforms like Qiskit~\cite{Javadi-Abhari2024}, or by manipulating qubits directly, as in Q\#~\cite{svoreEnablingScalableQuantum2018} and Ket~\cite{darosaKetQuantumProgramming2022}. The latter approach, while rooted in low-level gate applications, is beginning to offer higher-level instructions.

Despite the predominantly low-level nature of current quantum programming, we argue that higher-level coding paradigms are already emerging. The generated quantum program cannot be executed directly by a quantum computer and must undergo a compilation process in a classical computer. Furthermore, as addressed in this paper, certain instructions can reduce the lines of code and enable optimizations.

The quantum compilation process can be divided into three main steps. First, multi-qubit gates are decomposed into sequences of one- and two-qubit gates~\cite{Rosa2025}. While programming languages allow for the convenient use of multi-qubit gates, the underlying quantum hardware is often limited to performing only single- and two-qubit gates. Second, logical qubits are mapped to physical qubits~\cite{zhuDynamicLookAheadHeuristic2020,liTacklingQubitMapping2019}. This step must account for limitations in connectivity; while logical qubits are assumed to be fully connected, physical qubits can typically only interact with their immediate neighbors. The third step is to translate these one- and two-qubit gates into the native gate set of the target quantum computer. Although a quantum computer can perform universal computation, it implements a limited set of native gates. Once the program is decomposed into sequences of native gates that respect the hardware's connectivity, calibration data is used to generate the pulse sequences necessary to physically manipulate the qubits~\cite{Lussi2025}. This final step is typically performed by the quantum computer's controller immediately before execution~\cite{stefanazziQICKQuantumInstrumentation2022}.

This work builds upon the Ket quantum programming platform by presenting how a quantum-specific instruction, which implements an operation of the form $U^\dagger V U$, enables optimizations in the early stages of compilation, particularly during quantum gate decomposition. In addition, we show how this instruction facilitates the safe allocation and deallocation of auxiliary qubits, thereby allowing dynamic quantum memory management. The main contributions of this paper are:
\begin{itemize}
    \item Enabling approximate quantum gate decomposition, leading to more efficient circuits in a way that is transparent to the programmer.
    \item Ensuring that auxiliary qubits are disentangled and returned to the zero state before deallocation.
\end{itemize}

While dynamic quantum memory management does not immediately enable compilation optimization, it opens an avenue for future improvements, where the compiler has more freedom in mapping auxiliary qubits to physical qubits. Additionally, auxiliary qubit allocation allows for the implementation of functions and gates with simpler interfaces. The caller only needs to manage the primary qubits involved in the operation, while the auxiliary qubits are managed transparently by the compiler.

This paper is structured as follows. Section~\ref{sec:qp} presents the Ket quantum programming platform, with a focus on the high-level \texttt{with around} instruction and how multi-qubit gates arise naturally in quantum programming. Section~\ref{sec:opt} details the compiler optimizations enabled by this instruction, and Section~\ref{sec:aux} describes its role in the safe management of auxiliary qubits. Section~\ref{sec:ex} provides an example of the use of the \texttt{with around} instruction, focusing on the implementation of decomposition algorithms to demonstrate the performance impact of the proposed optimization. Finally, Section~\ref{sec:conclusion} presents our final remarks and outlines future work.

For the remainder of this paper, we assume the reader is familiar with the mathematical formalism of quantum computing and quantum circuit diagrams. For a general introduction to quantum computing, we refer the reader to the textbook Nielsen and Chuang~\cite{nielsenQuantumComputationQuantum2010}.

\section{Quantum Programming}\label{sec:qp}

In this section, we provide an introduction to quantum programming with Ket. The objective is not an exhaustive presentation, but rather to introduce the concepts and instructions relevant to the proposed optimizations. For a more in-depth introduction to Ket, we refer the reader to the project's official website\footnote{\url{https://quantumket.org}} and to some introductory papers presenting the platform~\cite{Rosa2025a,darosaKetQuantumProgramming2022}.

Quantum programming in Ket operates by directly manipulating the state of qubits, in contrast to circuit-centric platforms like Qiskit~\cite{Javadi-Abhari2024}. In Ket, qubits are first-class objects, and quantum gates are treated as functions that take qubits as input. The platform includes eight built-in single-qubit gates: the Pauli gates (\texttt{X}, \texttt{Y}, and \texttt{Z}), the rotation gates (\texttt{RX}, \texttt{RY}, and \texttt{RZ}), the phase gate (\texttt{P}), and the Hadamard gate (\texttt{H}). These are sufficient to prepare a single qubit in any arbitrary state.

Universal quantum computation requires multi-qubit gates. Although Ket does not provide built-in multi-qubit gates, it achieves universality by allowing any operation to be controlled. This is a core design principle of the platform. Any function that calls quantum gates (and does not allocate or measure qubits) is itself considered a quantum gate. This allows for the creation of complex, reusable operations that can also be controlled. For example, Figure~\ref{fig:ket:cnot} shows two equivalent implementations of a CNOT gate, created by applying a control qubit to a built-in \texttt{X} gate. Ket provides two primary ways to apply control: the \texttt{with control} context manager and the \texttt{ctrl()} function.

\begin{figure}[htbp]
    \centering
    \begin{minipage}{.31\linewidth}
        \begin{minted}[fontsize=\normalsize]{python3}
def my_cnot(c, t):
    with control(c):
        X(t)
  \end{minted}
    \end{minipage}
    \hfil
    \begin{minipage}{.28\linewidth}
        \begin{minted}[fontsize=\normalsize]{python3}
def my_cnot(c, t):
    ctrl(c, X)(t)

  \end{minted}
    \end{minipage}
    \caption{Equivalent CNOT gate implementations in Ket: \textsc{Left} via a \texttt{with control} block, \textsc{Right} via the \texttt{ctrl()} function.}
    \label{fig:ket:cnot}
\end{figure}

Quantum gates are unitary transformations, meaning they are reversible ($UU^\dagger  = U^\dagger U = I$). Ket provides the \texttt{adj()} function to obtain the adjoint (inverse) of any gate. As shown in Figure~\ref{fig:ket:rxx}, the function \texttt{rxx\_xplct} illustrates this capability by demonstrating the use of \texttt{adj(U)}.

%

\begin{figure}[htbp]
    \centering
    \begin{minipage}[t]{.49\linewidth}
        \begin{minted}{python3}
def rxx_xplct(angle: float, qubits):
    U = cat(kron(H, H), CNOT)
    U(qubits[0], qubits[1])
    RZ(angle, qubits[1])
    adj(U)(qubits[0], qubits[1])
  \end{minted}
    \end{minipage}
    \hfil
    \begin{minipage}[t]{.41\linewidth}
        \begin{minted}{python3}
def rxx(angle: float, qubits):
    U = cat(kron(H, H), CNOT)
    with around(U, *qubits):
        RZ(angle, qubits[1])

\end{minted}
    \end{minipage}
    \centering
    \includegraphics[width=.6\linewidth]{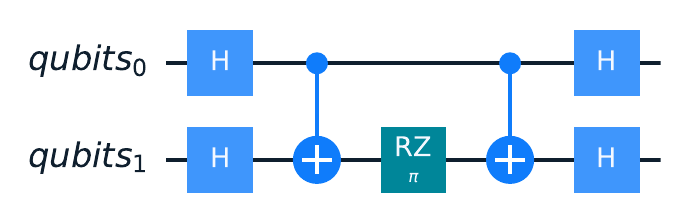}

    \caption{Ket implementation of the $R_{XX}$ gate. \textsc{Top Left}: The \texttt{rxx\_xplct} function implements the gate by explicitly calling the inverse of the \texttt{U} gate. \textsc{Top Right}: The \texttt{rxx} function uses the \texttt{with around} instruction, which automatically applies the inverse of the gate \texttt{U} at the end of the block. \textsc{Bottom}: Resulting quantum circuit with \texttt{angle} equal to $\pi$. Both codes generate the same quantum circuit.}
    \label{fig:ket:rxx}
\end{figure}

Many quantum algorithms use the pattern $U^\dagger V U$, where a transformation $V$ is applied within a unitary $U$. To simplify this common structure, Ket provides the \texttt{with around} instruction. This instruction automatically applies a gate $U$ at the beginning of a code block and its inverse $U^\dagger$ at the end. As shown in Figure~\ref{fig:ket:rxx}, the function \texttt{rxx\_xplct} implements the $R_{XX}$ gate by explicitly applying the operator $U$ and its adjoint. In contrast, the \texttt{rxx} function achieves the same result using \texttt{with around}. While both implementations are functionally equivalent, the \texttt{with around} instruction provides the compiler with semantic information that can be leveraged for optimizations, as we will discuss in the next section.

%
%
%
%
%
%
%
%

\section{Compiling Optimization}\label{sec:opt}

The use of the \texttt{with around} instruction to optimize quantum computations was first proposed by Rosa \textit{et al.}~\cite{rosaOptimizingGateDecomposition2025}. Their work focused on reducing the overhead of applying control to a gate that is implemented with this instruction. In this work, we propose a novel optimization that leverages the \texttt{with around} instruction to safely apply approximate gate decompositions. This approach requires fewer quantum resources while guaranteeing that the final result is correct.

This section is organized as follows: Section~\ref{subsec:opt:ctrl} reviews the existing optimization for controlled operations, and Section~\ref{subsec:opt:decomp} introduces our new method for using approximate decompositions.

\subsection{Controlled Gate Reduction}\label{subsec:opt:ctrl}

A controlled gate $C^nU$, where $U$ is a unitary operation and $n$ is the number of control qubits, can be defined by its action on the control and target qubits:
\begin{equation}\label{eq:cu}
    C^nU = \sum_{k=0}^{2^n-2}\ket{k}\!\!\bra{k}\otimes I + \ket{2^n{-1}}\!\!\bra{2^n{-1}}\otimes U
\end{equation}
This means the unitary $U$ is applied to the target qubits \textit{if and only if} all $n$ control qubits are in the state $\ket{1}$. The state of the control qubits are not altered by the operation.

Given a quantum gate $U$ that is composed of a sequence of gates,
\begin{equation}
    U = U_k \cdots U_1 U_0 \equiv
    \Qcircuit @C=1em @R=.7em {
    & {/}\qw&\gate{U} & \push{\rule{.1em}{0em}=\rule{.1em}{0em}}\qw & \gate{U_0} & \gate{U_1} & \push{\rule{.1em}{0em}\cdots\rule{.1em}{0em}}\qw &\gate{U_k}& \qw
    },
\end{equation}
its controlled version can be decomposed by distributing the control over each gate in the sequence. This relationship is shown below:
\begin{equation}
    C^nU = C^nU_k\cdots  C^nU_1 C^nU_0\equiv
    \begin{array}{c}
        \Qcircuit @C=1em @R=.7em {
         & {/}^{n}\qw & \ctrl{1} & \qw \\
         & {/} \qw    & \gate{U} & \qw
        }
    \end{array}=
    \begin{array}{c}
        \Qcircuit @C=1em @R=.7em {
         & {/}^{n}\qw & \ctrl{1}   & \ctrl{1}   & \push{\rule{.1em}{0em}\cdots\rule{.1em}{0em}}\qw & \ctrl{1}   & \qw \\
         & {/} \qw    & \gate{U_0} & \gate{U_1} & \push{\rule{.1em}{0em}\cdots\rule{.1em}{0em}}\qw & \gate{U_k} & \qw
        }
    \end{array}
\end{equation}
In Ket, this decomposition is performed recursively until the operation consists of a sequence of controlled built-in gates.

Quantum gates constructed using the \texttt{with around} instruction take the form $U = A^\dagger B A$. A naive decomposition of its controlled version would be
\begin{equation}
    C^nU = C^n(A^\dagger)\cdot C^nB\cdot C^nA.
\end{equation}
However, the controls on the $A$ and $A^\dagger$ gates can be eliminated. This optimization is formalized in the following theorem.

\begin{theorem}\label{theorem:rmc}
    Let $U$ be a unitary operation of the form $U = A^\dagger B A$, where $A$ and $B$ are also unitary. The n-controlled version of $U$, denoted $C^nU$, can be simplified as:
    \begin{equation}
        C^n(A^\dagger B A) = A^\dagger (C^nB) A
    \end{equation}
    Visually, this equivalence is:
    \begin{equation}
        \begin{array}{c}
            \Qcircuit @C=1em @R=.7em {
             & {/}^n\qw & \ctrl{1} & \qw \\
             & {/}\qw   & \gate{U} & \qw
            }
        \end{array}=\begin{array}{c}
            \Qcircuit @C=1em @R=.7em {
             & {/}^n\qw & \qw      & \ctrl{1} & \qw              & \qw \\
             & {/}\qw   & \gate{A} & \gate{B} & \gate{A^\dagger} & \qw
            }
        \end{array}
    \end{equation}
\end{theorem}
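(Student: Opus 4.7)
The plan is to verify the identity by direct expansion of the definition in equation~\eqref{eq:cu}. Writing $P_k = \ket{k}\!\bra{k}$ for brevity and substituting $U = A^\dagger B A$, the left-hand side becomes
\[
C^n(A^\dagger B A) = \sum_{k=0}^{2^n-2} P_k \otimes I + P_{2^n-1} \otimes A^\dagger B A,
\]
while the right-hand side (reading $A$ and $A^\dagger$ with an implicit identity on the control register) expands as
\[
(I \otimes A^\dagger)\Bigl(\sum_{k=0}^{2^n-2} P_k \otimes I + P_{2^n-1} \otimes B\Bigr)(I \otimes A).
\]
Distributing the outer factors through the sum yields $P_k \otimes A^\dagger A = P_k \otimes I$ for each $k < 2^n - 1$, by unitarity of $A$, and $P_{2^n-1} \otimes A^\dagger B A$ for the last term. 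The two sides match term by term.

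An alternative, more conceptual route is to first apply the distributivity identity $C^n(XY) = C^nX \cdot C^nY$ used just before the theorem, to obtain $C^n(A^\dagger B A) = C^n(A^\dagger) \cdot C^n B \cdot C^n A$, and then argue that the outer controls can be dropped. This step is most transparent when tested on a computational basis state $\ket{k}\ket{\psi}$: if $k < 2^n - 1$ then the inner $C^n B$ acts as identity on the target, and removing the outer controls replaces the trivial action by $A^\dagger A = I$, which is again trivial; if $k = 2^n - 1$ then all three gates fire in either version and both sides return $\ket{k} \otimes A^\dagger B A \ket{\psi}$.

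I do not foresee any real obstacle beyond bookkeeping. The one point that deserves care is the tensor-factor convention, namely that $A$, $B$, and $A^\dagger$ act only on the target register, and so must be read on the combined space as $I \otimes A$, $I \otimes B$, and $I \otimes A^\dagger$. Once that is made explicit, the whole argument reduces to a one-line case analysis on whether the control register is in the all-ones state, and the cancellation $A^\dagger A = I$ on the non-firing branch is exactly what permits the controls on the sandwiching $A$'s to be removed.
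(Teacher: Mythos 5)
Your proof is correct and follows essentially the same reasoning as the paper: the paper argues by cases on whether the control register is in the all-ones state, and your projector expansion is just that same two-case argument written out algebraically, with the identical key step $A^\dagger A = I$ on the non-firing branch. Your second, ``conceptual'' route is in fact verbatim the paper's own proof.
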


\begin{proof}
    The proof considers two cases based on the state of the control qubits. First, if all control qubits are in the state $\ket{1}$, the controlled-$B$ operation is applied, which results in the operation $A^\dagger B A$ on the target qubit(s), the desired transformation. Second, if the control qubits are not all in the state $\ket{1}$, the controlled-$B$ acts as the identity. The resulting operation is $A^\dagger I A = A^\dagger A$. Since $A$ is unitary, this product simplifies to the identity $I$, correctly leaving the target qubits unchanged.
    \qed
\end{proof}

\subsection{Approximated Decomposition}\label{subsec:opt:decomp}

An approximate decomposition of a unitary operation is one that differs from the exact operation only by a local phase, \textit{i.e.}, $\overline{U} = D U$ where $D$ is a diagonal unitary. Such decompositions are often useful for reducing the cost of multi-qubit gates. A well-known example is the approximate decomposition of the Toffoli gate~\cite{maslovAdvantagesUsingRelativephase2016} shown Eq.~\eqref{eq:toffoli_apx}. It uses only four CNOT gates instead of the standard six, but differs from the exact Toffoli gate by a phase factor on one of the input states ($U\ket{010} = -\ket{010}$). The standard decomposition of the Toffoli gate is shown in Eq.~\eqref{eq:toffoli}.

\begin{equation}\label{eq:toffoli_apx}
    \begin{array}{c}
        \Qcircuit @C=.5em @R=1.5em {
         & \qw                          & \ctrl{2} & \qw                          & \qw      & \qw                         & \ctrl{2} & \qw                         & \qw \\
         & \qw                          & \qw      & \qw                          & \ctrl{1} & \qw                         & \qw      & \qw                         & \qw \\
         & \gate{{R_Y}(\frac{-\pi}{4})} & \targ    & \gate{{R_Y}(\frac{-\pi}{4})} & \targ    & \gate{{R_Y}(\frac{\pi}{4})} & \targ    & \gate{{R_Y}(\frac{\pi}{4})} & \qw
        }
    \end{array}
    = \begin{bmatrix}
        1 & 0 & 0  & 0 & 0 & 0 & 0 & 0 \\
        0 & 1 & 0  & 0 & 0 & 0 & 0 & 0 \\
        0 & 0 & -1 & 0 & 0 & 0 & 0 & 0 \\
        0 & 0 & 0  & 1 & 0 & 0 & 0 & 0 \\
        0 & 0 & 0  & 0 & 1 & 0 & 0 & 0 \\
        0 & 0 & 0  & 0 & 0 & 1 & 0 & 0 \\
        0 & 0 & 0  & 0 & 0 & 0 & 0 & 1 \\
        0 & 0 & 0  & 0 & 0 & 0 & 1 & 0 \\
    \end{bmatrix}
\end{equation}

\begin{equation}\label{eq:toffoli}
    \begin{array}{c}
        \Qcircuit @C=.5em @R=1.5em {
         & \qw      & \qw      & \qw              & \ctrl{2} & \qw      & \qw      & \qw              & \ctrl{2} & \qw      & \ctrl{1} & \gate{T}         & \ctrl{1} & \qw \\
         & \qw      & \ctrl{1} & \qw              & \qw      & \qw      & \ctrl{1} & \qw              & \qw      & \gate{T} & \targ    & \gate{T^\dagger} & \targ    & \qw \\
         & \gate{H} & \targ    & \gate{T^\dagger} & \targ    & \gate{T} & \targ    & \gate{T^\dagger} & \targ    & \gate{T} & \gate{H} & \qw              & \qw      & \qw \\
        }
    \end{array}
    =
    \begin{bmatrix}
        1 & 0 & 0 & 0 & 0 & 0 & 0 & 0 \\
        0 & 1 & 0 & 0 & 0 & 0 & 0 & 0 \\
        0 & 0 & 1 & 0 & 0 & 0 & 0 & 0 \\
        0 & 0 & 0 & 1 & 0 & 0 & 0 & 0 \\
        0 & 0 & 0 & 0 & 1 & 0 & 0 & 0 \\
        0 & 0 & 0 & 0 & 0 & 1 & 0 & 0 \\
        0 & 0 & 0 & 0 & 0 & 0 & 0 & 1 \\
        0 & 0 & 0 & 0 & 0 & 0 & 1 & 0 \\
    \end{bmatrix}
\end{equation}

Because these decompositions are not exactly equivalent to the original operation, they can only be used in specific cases where the erroneous local phases cancel out. Our proposal is that the compiler can leverage the structure of the \texttt{with around} instruction to automatically identify circuits where it is safe to substitute a gate with its more efficient approximate version.

For the purpose of this paper, we define two classes of gates. A \emph{permutation gate} is any unitary that has only one non-zero element in each row and column, such as the Pauli gates. A \emph{diagonal gate} is a unitary with non-zero elements only on its main diagonal, such as the Pauli-\texttt{Z} and \texttt{RZ} gates. It is important to note that a controlled-permutation gate is also a permutation gate, and a controlled-diagonal gate is still a diagonal gate.

\begin{theorem}\label{theorem:apx}
    Given a unitary operation of the form $P^\dagger D P$, where $P$ is a permutation gate, any approximate decomposition of $P$ that differs only by a local phase may be used without altering the final unitary, provided that the operator $D$ preserves the computational basis of the qubits transformed by $P$.
\end{theorem}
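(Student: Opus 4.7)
\medskip
\noindent\textbf{Proof plan.}
The plan is to reduce the claim to a commutation statement between the unwanted phase and the middle operator $D$. By hypothesis, the approximate decomposition $\overline{P}$ differs from $P$ by a local phase, so I would first write
\[
    \overline{P} \;=\; \Delta\, P,
\]
where $\Delta$ is a diagonal unitary acting on (a subset of) the qubits that $P$ acts on. Substituting this into the sandwich yields
\[
    \overline{P}^\dagger D\, \overline{P} \;=\; P^\dagger \Delta^\dagger D\, \Delta\, P,
\]
so the whole claim collapses to showing $\Delta^\dagger D\, \Delta = D$, i.e.\ that $\Delta$ commutes with $D$.

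\medskip
\noindent\textbf{Using the hypothesis on $D$.}
Next I would unpack the assumption that $D$ preserves the computational basis of the qubits transformed by $P$. Writing $S$ for that set of qubits and $\bar S$ for the remaining qubits, the hypothesis means that $D$ acts on basis states $\ket{k}_S$ only by attaching some operator on $\bar S$, i.e.
\[
    D \;=\; \sum_{k} \ket{k}\!\bra{k}_S \otimes D_k,
\]
for some family $\{D_k\}$ of operators on $\bar S$. Since $\Delta$ is diagonal on $S$ and trivial on $\bar S$, it also has the form $\Delta = \sum_k e^{i\varphi_k}\ket{k}\!\bra{k}_S \otimes I_{\bar S}$. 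Multiplying these two block decompositions term by term gives $\Delta D = D \Delta$, and hence $\Delta^\dagger D\, \Delta = D$, which combined with the previous display finishes the proof.

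\medskip
\noindent\textbf{Expected obstacle.}
The algebra is essentially one line once the right decompositions are in hand; the only real care is in formalising \emph{what it means for the local phase to live on the same qubits as $P$} and \emph{what ``preserves the computational basis'' means} when $D$ may also act on ancillary or target qubits outside $S$. I would therefore spend most of the write-up pinning down these two structural facts (that $\Delta$ is block-diagonal in the computational basis of $S$, and that $D$ is too), since once both operators are simultaneously diagonal in the $S$-basis their commutation, and thus the invariance of $P^\dagger D P$ under the substitution $P \mapsto \overline{P}$, is immediate.
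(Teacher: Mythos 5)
Your proof is correct. You factor the approximation as $\overline{P}=\Delta P$ with $\Delta$ diagonal on the qubits $P$ acts on, observe that the hypothesis on $D$ means $D=\sum_k \ket{k}\!\bra{k}_S\otimes D_k$ is block-diagonal in the same basis, and conclude $\Delta^\dagger D\Delta=D$ by commutation, so $\overline{P}^\dagger D\overline{P}=P^\dagger D P$. This is exactly the commutation argument the paper itself sketches immediately after its main proof (``Another way to prove this is to note that the diagonal gate $D_p$ defining the approximation commutes with the other gates\dots''), so you have independently reproduced the paper's alternative route. The paper's designated proof instead carries out the explicit computation: it writes $P=\sum_k e^{i\phi_k}\ket{p(k)}\!\bra{k}$, multiplies out $(P^\dagger\otimes I)\cdot C^nU\cdot((D\cdot P)\otimes I)$, and shows the phases $\phi_k$ (and likewise the extra phases $\gamma_k$ of $\overline{P}$) cancel term by term. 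Your version is shorter and slightly more general, since it never needs the specific $D$-then-$C^nU$ circuit structure, only the block-diagonality of the middle operator; the paper's explicit computation buys a concrete formula for the resulting unitary, which it reuses in the proof of the auxiliary-qubit theorem. The only point worth pinning down in a final write-up, as you anticipate, is that $\Delta=\overline{P}P^\dagger$ is supported on the same qubits as $P$, which follows directly from the paper's definition of an approximate decomposition.
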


Preserving the computation basis means that if a basis state $\ket{k}$ enters $D$, the output must be of the form $e^{i\theta_k}\ket{k}$ (globally, the state may change in other qubits, but not in the one $P$ permutes). For example, $D$ itself could be a controlled gate where the qubits in $P$ acts upon are the control qubits.

The intuition is that the phase error introduced by the approximate $P$ is exactly canceled by the conjugate phase error from its adjoint, $P^\dagger$. This cancellation is guaranteed by the structure of the circuit. The general form of a circuit that satisfies the conditions of Theorem~\ref{theorem:apx} is shown below, where $P$ is a permutation gate, $D$ is a diagonal gate, and $C^nU$ is an arbitrary controlled operation.
\begin{equation}\label{eq:apx}
    \Qcircuit @C=1em @R=.7em {
    & {/}\qw & \gate{P}  & \gate{D}  & \ctrl{1} & \gate{P^\dagger} & \qw   \\
    & {/}\qw &  \qw    & \qw     & \gate{U} &  \qw       & \qw
    }
\end{equation}

\begin{proof}
    Our proof consists of calculating the unitary for the circuit in Eq.~\eqref{eq:apx} and showing that the result is identical when $P$ is replaced with an approximate version, $\overline{P}$.

    First, we demonstrate that a diagonal gate $D$ acting on a control qubits commutes with a controlled-unitary $C^nU$ acting on a target qubits. Let $D$ and $C^nU$ be defined as:
    \begin{equation}
        D = \sum_{k=0}^{2^n{-1}}e^{i\theta_k}\ket{k}\!\!\bra{k}, \quad C^nU = \sum_{j=0}^{2^n-2}\ket{j}\!\!\bra{j}\otimes I + \ket{2^n{-1}}\!\!\bra{2^n{-1}}\otimes U\label{eq:dcu}
    \end{equation}
    The product of these operators is:
    \begin{equation}
        (D\otimes I) \cdot C^nU = C^nU \cdot (D\otimes I) = \sum_{k=0}^{2^n-2}e^{i\theta_k}\ket{k}\!\!\bra{k}\otimes I + e^{i\theta_{2^n{-1}}}\ket{2^n{-1}}\!\!\bra{2^n{-1}}\otimes U
    \end{equation}
    Thus, the order of $D$ and $C^nU$ does not matter. Now, let the permutation gate $P$ be defined by a permutation function $p(k)$ and a set of local phases $\{\phi_k\}$:
    \begin{equation}
        P = \sum_{k=0}^{2^n{-1}}e^{i\phi_k}\ket{p(k)}\!\!\bra{k}, \quad P^\dagger = \sum_{k=0}^{2^n{-1}}e^{-i\phi_k}\ket{k}\!\!\bra{p(k)}\label{eq:p}
    \end{equation}
    Let us analyze the state of the circuit from Eq.~\eqref{eq:apx} just before the final $P^\dagger$ gate is applied:
    \begin{equation}
        \begin{aligned}
             & C^nU\cdot (D \cdot P)\otimes I \\ &\quad= \sum_{k=0}^{2^n-2}e^{i(\theta_{p(k)} +\phi_k)}\ket{p(k)}\!\!\bra{k}\otimes I \\&\quad\quad+ e^{i(\theta_{p(2^n-1)}+\phi_{2^n{-1}})}\ket{p(2^n-1)}\!\!\bra{2^n{-1}}\otimes U
        \end{aligned}
    \end{equation}
    Finally, applying the $P^\dagger$ gate will cancel out the phases $\theta_k$ introduced by the $P$ gate:
    \begin{equation}\label{eq:final_opt}
        \begin{aligned}
             & (P^\dagger\otimes I)\cdot C^nU\cdot ((D \cdot P)\otimes I)                                                                       \\
             & \quad = \sum_{k=0}^{2^n-2}e^{i(\theta_p(k) +{\color{red}\phi_k-\phi_k})}\ket{p(k)}\!\!\bra{p(k)}\otimes I                        \\
             & \quad\quad  + e^{i(\theta_{p(2^n{-1})}+{\color{red}\phi_{2^n{-1}}-\phi_{2^n{-1}}})}\ket{p(2^n{-1})}\!\!\bra{p(2^n{-1})}\otimes U
        \end{aligned}
    \end{equation}
    Let an approximate version of $P$, denoted $\overline{P}$, differ by a diagonal phase gate $D_p$, such that $\overline{P} = D_p P$:
    \begin{equation}\label{eq:dp}
        \overline{P} = \left(\sum_{k=0}^{2^n{-1}}e^{i\gamma_k}\ket{k}\!\!\bra{k}\right) \left(\sum_{k=0}^{2^n{-1}}e^{i\phi_k}\ket{p(k)}\!\!\bra{k}\right) = \sum_{k=0}^{2^n{-1}}e^{i(\gamma_{p(k)}+\phi_k)}\ket{p(k)}\!\!\bra{k}
    \end{equation}
    Note that the $P$ and $\overline{P}$ gates only differ by the local phases $\gamma_k$, which are cancelled in the same manner as the $\phi_k$ terms. Therefore, replacing $P$ with $\overline{P}$ in the circuit of Eq.~\ref{eq:apx} results in the same final unitary.
    \qed
\end{proof}

Another way to prove this is to note that the diagonal gate $D_p$ defining the approximation commutes with the other gates. Therefore, we can rearrange the expression so that $D_p$ and $D_p^\dagger$ cancel out:
\begin{equation}
    \begin{aligned}
         & (\overline{P}^\dagger \otimes I) \cdot (D \otimes I) \cdot C^nU \cdot (\overline{P} \otimes I)                    \\
         & \quad= (P^\dagger D_p^\dagger \otimes I) \cdot (D \otimes I) \cdot C^nU \cdot (D_p P \otimes I)                   \\
         & \quad= (P^\dagger \otimes I) \cdot (D \otimes I) \cdot C^nU \cdot (P \otimes I) \cdot (D_p^\dagger D_p \otimes I) \\
         & \quad= (P^\dagger \otimes I) \cdot (D \otimes I) \cdot C^nU \cdot (P \otimes I)
    \end{aligned}
\end{equation}

\section{Auxiliar Qubit Allocation}\label{sec:aux}

Auxiliary qubits assist in the application of a given operation. They are not strictly necessary for the implementation of a unitary, but they can facilitate its decomposition and make the execution more efficient. One example of auxiliary qubit usage is in the decomposition of multi-controlled gates. For instance, for the $n$-controlled NOT gate, the best-known algorithm without auxiliary qubits uses $O(n^2)$ CNOT gates~\cite{dasilvaLineardepthQuantumCircuits2022}. However, given enough auxiliary qubits, it is possible to decompose it in $O(n)$ CNOTs~\cite{Rosa2025}.

An auxiliary qubit must be returned to its initial state after the operation is complete. This process, often called uncomputation~\cite{bichselSilqHighlevelQuantum2020}, ensures that the qubit is not entangled with the rest of the system and cannot generate unwanted interference in subsequent operations. There are two kinds of auxiliary qubits, clean and dirty, depending on their state before the operation. \emph{Clean} auxiliary qubits are guaranteed to be in the state $\ket{0}$, while \emph{dirty} ones can be in an unknown state. While clean auxiliary qubits may be less available, they usually result in more efficient implementations~\cite{Rosa2025}.

Ket automatically manages auxiliary qubits for internal quantum gate decompositions, ensuring they are returned to their original state. In this work, however, we propose using the \texttt{with around} instruction to safely expose auxiliary qubits to the programmer. The primary goal is to ensure that an auxiliary qubit is correctly returned to its initial state after an operation. We propose a pessimistic strategy to validate that a clean auxiliary qubit remains in its original state, and thus can be safely freed to be used in other operations without side effects from entanglement.

\begin{theorem}\label{theorem:alloc}
    Given the circuit below, where the auxiliary qubits are initialized in the state $\ket{\alpha} = \ket{0\dots 0}$, the qubits' state returns to its initial state upon completion of the full sequence of operations.
    \begin{equation}\label{eq:alloc}
        \Qcircuit @C=1.2em @R=1em {
        \lstick{\ket{\psi} }     & \qw /^{n}  & \ctrl{2} & \qw & \qw     & \ctrl{2}     & \qw \\
        \lstick{\ket{\varphi} }     & \qw /^{m}  & \qw    & \qw & \gate{U}  & \qw        & \qw \\
        \lstick{\ket{\alpha} }  &  \qw /^{a}  & \gate{P} & \gate{D} & \ctrl{-1} & \gate{P^\dagger} & \qw
        }
    \end{equation}
    This holds given that $P$ is a permutation gate, $D$ is a diagonal gate, and $U$ is an arbitrary unitary gate. The number of qubits in each subsystem is $n, a > 0$ and $m \geq 0$.
\end{theorem}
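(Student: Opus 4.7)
The plan is to reduce the claim to an analysis on computational basis states of $\ket{\psi}$ and exploit the basis-preserving nature of permutation and diagonal gates on the auxiliary register. By linearity and separability of the input $\ket{\psi}\otimes\ket{\varphi}\otimes\ket{0\dots 0}_\alpha$, I would expand $\ket{\psi}=\sum_k c_k\ket{k}$ and track each branch of the sum independently, then reassemble to confirm that the auxiliary register factorizes out cleanly.

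For a branch $\ket{k}$ with $k\neq 2^n-1$, the outer controls on $\psi$ do not fire, so the controlled-$P$ and controlled-$P^\dagger$ act as the identity on $\alpha$. The auxiliary therefore sees only $D$, which is diagonal and sends $\ket{0\dots 0}$ to $e^{i\theta_0}\ket{0\dots 0}$. The inner controlled-$U$ has the auxiliary as its \emph{control} (targeting $\varphi$), so it leaves the auxiliary state untouched regardless of whether it fires. Hence, this branch emerges with $\alpha$ in $e^{i\theta_0}\ket{0\dots 0}$.

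For the branch $\ket{2^n-1}$, the outer controls fire, and the auxiliary experiences the sandwich $P^\dagger D_{\mathrm{eff}} P$, where $D_{\mathrm{eff}}$ gathers $D$ together with the (possible) coupled action of the inner controlled-$U$; crucially, $D_{\mathrm{eff}}$ is diagonal on $\alpha$ by the same argument used in Theorem~\ref{theorem:apx}. Using the permutation structure $P\ket{0\dots 0}=e^{i\phi_0}\ket{p(0)}$, the diagonal factor which prepends a phase $e^{i\theta_{p(0)}}$ (and may apply $U$ to $\varphi$ if $p(0)=2^a-1$), and finally $P^\dagger\ket{p(0)}=e^{-i\phi_0}\ket{0\dots 0}$, the auxiliary is restored to $e^{i\theta_{p(0)}}\ket{0\dots 0}$.

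Reassembling the two cases, the auxiliary register factorizes out as $\ket{0\dots 0}_\alpha$, with the branch-dependent scalar phases and any action on $\varphi$ absorbed into the $\psi\otimes\varphi$ subsystem. The main subtlety, and the step I expect to take the most care, is this reassembly: even though the phase $e^{i\theta_0}$ from the non-firing branches and $e^{i\theta_{p(0)}}$ from the firing branch differ in general, both are scalars multiplying the \emph{same} state $\ket{0\dots 0}_\alpha$, so they do not tag the auxiliary with distinct values and hence cannot entangle it with $\psi$. This is precisely what makes the circuit a safe uncomputation pattern: the induced unitary on $\psi\otimes\varphi$ may be nontrivial, but $\alpha$ returns disentangled to its initial value, justifying its deallocation.
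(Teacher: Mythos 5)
Your proof is correct and follows essentially the same route as the paper's: both rely on the fact that $D$ together with the controlled-$U$ (which has $\alpha$ only as control) acts diagonally on the auxiliary register, so the sandwich $P^\dagger(\cdot)P$ returns $\ket{0\dots0}_\alpha$ up to a scalar phase that factors out. The only difference is organizational --- you trace basis states branch-by-branch over the $\psi$ controls, whereas the paper computes the full operator $A^\dagger B A$ and then applies the projector $\ket{0}\!\!\bra{0}_\alpha$, splitting cases on whether the permutation sends the all-ones string to $0$ --- and your explicit handling of the reassembly step (branch-dependent phases all multiplying the \emph{same} state $\ket{0\dots0}_\alpha$) makes the separability conclusion, which the paper leaves implicit in its operator form, fully transparent.
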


The circuit from Eq.~\eqref{eq:alloc} is equivalent to the following unitary operation, where we use subscripts to indicate the qubits on which the gates and controls act:
\begin{equation}
    (C^n_\psi P^\dagger_\alpha \otimes I_{\varphi}) \cdot (I_{\psi} \otimes C^a_\alpha U_\varphi) \cdot (I_{\psi} \otimes I_{\varphi}\otimes D_\alpha)  \cdot (C^n_\psi P_\alpha \otimes I_{\varphi})
\end{equation}
Defining the unitaries $P$ and $D$ as in Eq.~\eqref{eq:p} and \eqref{eq:dcu}, we see this circuit structure matches the $A^\dagger B A$ pattern of the \texttt{with around} instruction, where we have:
\begin{equation}
    A = (C^n_\psi P_\alpha \otimes I_{\varphi}), \quad B = (I_{\psi} \otimes C^a_\alpha U_\varphi) \cdot (I_{\psi} \otimes I_{\varphi}\otimes D_\alpha)\label{eq:alloc:ab}
\end{equation}

\begin{proof}
    Our proof consists of showing that the circuit in Eq.~\eqref{eq:alloc} only applies a local phase to the auxiliary qubits $\ket{\alpha}$. Since this qubits are initialized in the state $\ket{0\dots0}$, a local phase keep the subsystems separable:
    \begin{equation}
        \ket{\nu}\otimes e^{i\theta} \ket{0\dots0} = e^{i\theta} \ket{\nu}\otimes \ket{0\dots0},
    \end{equation}
    where $\ket{\nu}$ is the arbitrary state of the other qubits.

    We start by computing the unitary $B$ from Eq.~\eqref{eq:alloc:ab}
    \begin{equation}
        B =  I_\psi \otimes \left[ I_\varphi \otimes \sum_{k=0}^{2^a{-2}}e^{i\theta_k}\ket{k}\!\!\bra{k}_{\alpha} \;+\;  U_\varphi \otimes e^{i\theta_{2^a{-1}}}\ket{2^a{-1}}\!\!\bra{2^a{-1}}_{\alpha}\right],
    \end{equation}
    and then the final unitary $A^\dagger B A$
    \begin{equation}
        \begin{aligned}
            A^\dagger B A = & \sum_{j=0}^{2^n{-2}} \ket{j}\!\!\bra{j}_\psi &  & \otimes &  & I_\varphi &  & \otimes &  & I_\alpha                                                           \\
                            & + \ket{2^n{-1}}\!\!\bra{2^n{-1}}_\psi        &  & \otimes &  & I_\varphi &  & \otimes &  & \sum_{k=0}^{2^a{-2}}e^{i\theta_k}\ket{p(k)}\!\!\bra{p(k)}_{\alpha} \\
                            & + \ket{2^n{-1}}\!\!\bra{2^n{-1}}_\psi        &  & \otimes &  & U_\varphi &  & \otimes &  & e^{i\theta_{2^a{-1}}}\ket{p(2^a{-1})}\!\!\bra{p(2^a{-1})}_{\alpha}
        \end{aligned}
    \end{equation}

    Since the state of the auxiliary qubits $\ket{\alpha}$ are initialized to the $\ket{0\dots0}$ state, we can analyze the circuit's action by applying the projector $\ket{0}\!\!\bra{0}_\alpha$ to the final unitary without changing the outcome. We now analyze the result in two cases, depending on the permutation function $p$ of the unitary $P$.
    \begin{description}
        \item[Case 1] When $p(2^a{-1}) = 0$,  the final unitary is:
            \begin{equation}
                \begin{aligned}
                    A^\dagger B A\ket{0}\!\!\bra{0}_\alpha = & \begin{aligned}[t]
                                                                    & \sum_{j=0}^{2^n{-2}} \ket{j}\!\!\bra{j}_\psi &  & \otimes &  & I_\varphi &  & \otimes &  & \ket{0}\!\!\bra{0}_{\alpha}                      \\
                                                                    & + \ket{2^n{-1}}\!\!\bra{2^n{-1}}_\psi        &  & \otimes &  & U_\varphi &  & \otimes &  & e^{i\theta_{2^a{-1}}}\ket{0}\!\!\bra{0}_{\alpha} \\
                                                               \end{aligned}                                                                                              \\
                    =                                        & \left[ \sum_{j=0}^{2^n{-2}} \ket{j}\!\!\bra{j}_\psi  \otimes  I_\varphi                                       +  \ket{2^n{-1}}\!\!\bra{2^n{-1}}_\psi \otimes  e^{i\theta_{2^a{-1}}} U_\varphi \right]  \otimes  \ket{0}\!\!\bra{0}_{\alpha}
                \end{aligned}
            \end{equation}
            In this case, a local phase is applied to the auxiliary qubits, which remains unentangled.
        \item[Case 2] When $p(2^a{-1}) \not = 0$,  the final unitary is:
            \begin{equation}
                \begin{aligned}
                    A^\dagger B A\ket{0}\!\!\bra{0}_\alpha = & \begin{aligned}[t]
                                                                    & \sum_{j=0}^{2^n{-2}} \ket{j}\!\!\bra{j}_\psi &  & \otimes &  & I_\varphi &  & \otimes &  & \ket{0}\!\!\bra{0}_{\alpha}                        \\
                                                                    & + \ket{2^n{-1}}\!\!\bra{2^n{-1}}_\psi        &  & \otimes &  & I_\varphi &  & \otimes &  & e^{i\theta_{p^{-1}(0)}}\ket{0}\!\!\bra{0}_{\alpha}
                                                               \end{aligned}                                                  \\
                    =                                        & \left[\sum_{j=0}^{2^n{-2}} \ket{j}\!\!\bra{j}_\psi  \otimes  I_\varphi  + \ket{2^n{-1}}\!\!\bra{2^n{-1}}_\psi   \otimes  e^{i\theta_{p^{-1}(0)}}I_\varphi\right]  \otimes  \ket{0}\!\!\bra{0}_{\alpha}
                \end{aligned}
            \end{equation}
            Where $p^{-1}$ is the inverse function $p$. Again, the auxiliary qubits are only multiplied by a phase and remains separable from the other qubits.
    \end{description}
    \qed
\end{proof}

This circuit structure can be enforced by the Ket compiler. Only diagonal gates are permitted to act directly on an auxiliary qubit qubits. However, when using the \texttt{with around} instruction ($A^\dagger B A$), the unitary $A$ is permitted to be a permutation gate, under the condition that the qubits involved in $A$ are not modified by the inner unitary $B$.

\section{Example: Decomposition Algorithm}\label{sec:ex}

In this section, we present an implementation of the V-chain decomposition algorithm for the multi-controlled NOT gate~\cite{barencoElementaryGatesQuantum1995} to demonstrate the practical benefits of Ket's high-level instruction. Figure~\ref{code:vchain} illustrates two possible implementations: one that leverages the \texttt{with around} instruction and another that uses an explicit approach. The objective is to show how a high-level abstraction can both simplify the programming effort and enable compiler optimizations that improve performance. While this is an illustrative example\footnote{Ket's compiler can automatically decompose multi-controlled gates.}, the optimizations presented in this work allow a high-level implementation to match the performance of the compiler's own specialized, hand-tuned code.

The recursive function \texttt{v\_chain} in Figure~\ref{code:vchain} is structured to allow the compiler to apply the optimizations from Theorem~\ref{theorem:apx} and Theorem~\ref{theorem:alloc}. This enables the automatic use of approximate Toffoli gate decompositions and ensures the safe management of auxiliary qubits. In contrast, the function \texttt{v\_chain\_x} implements the same logic without the \texttt{with around} instruction. Consequently, the compiler cannot apply the approximate decomposition, and the auxiliary qubits must be managed explicitly by the programmer.

\begin{figure}[htbp]
    \centering
    \begin{minipage}{.54\linewidth}
        \begin{minted}{python3}
@using_aux(a=lambda c: int(len(c) > 2))
def v_chain(c, t, a):
    if len(c) <= 2:
        ctrl(c, X)(t)
    else:
        with around(ctrl(c[:2], X), a):
            v_chain(a + c[2:], t)

    \end{minted}
    \end{minipage}
    \hfill
    \begin{minipage}{.43\linewidth}
        \begin{minted}{python3}
def v_chain_x(c, t, a):
    if len(c) <= 2:
        ctrl(c, X)(t)
    else:
        ctrl(c[:2], X)(a[0])
        v_chain_x(a[0] + c[2:],
                  t, a[1:])
        ctrl(c[:2], X)(a[0])
    \end{minted}
    \end{minipage}
    \caption{Two implementations of the V-chain multi-controlled NOT decomposition. \textsc{Left}:~A~concise version using the \texttt{with around} construct and the \texttt{@using\_aux} decorator for auxiliary qubit allocation. \textsc{Right}: An explicit implementation that manually manages the auxiliary qubits and gate calls.}
    \label{code:vchain}
\end{figure}

To facilitate the dynamic allocation of auxiliary qubits, we introduce the \texttt{@using\_aux} function decorator. This feature allows the programmer to define the scope and lifetime of auxiliary qubits precisely. The decorator takes keyword arguments where each key defines the name of an auxiliary qubits (\textit{e.g.}, \texttt{a}), and the value is a lambda function that calculates the number of qubits to allocate based on the main function's arguments (\textit{e.g.}, \texttt{c}). In the \texttt{v\_chain} example, one auxiliary qubit is allocated for the variable \texttt{a} if the number of control qubits in \texttt{c} is greater than two. This allocation is re-evaluated at each step of the recursion. As the auxiliary qubits are managed automatically, it is injected as an argument into the function's scope but does not need to be passed in the recursive call, simplifying the function's signature. In contrast, \texttt{v\_chain\_x} requires the auxiliary qubits to be passed and managed manually at every step.

The structure of the \texttt{v\_chain} function's recursive step matches the conditions of our theorems. The call \texttt{with around(ctrl(c[:2], X), a)} defines the permutation gate $P$ as a Toffoli gate acting on an auxiliary qubit. The inner recursive call, \texttt{v\_chain(a + c[2:], t)}, uses the auxiliary qubit \texttt{a} only as a control. This satisfies the conditions of Theorem~\ref{theorem:apx}, allowing the compiler to substitute the Toffoli gate $P$ with a more efficient approximate version. It also satisfies Theorem~\ref{theorem:alloc}, guaranteeing that the auxiliary qubit is correctly uncomputed and can be safely freed.

The implementation using the \texttt{with around} instruction is not only easier to write and maintain, but it also generates a more efficient circuit, as shown in Figure~\ref{fig:vchain_circuit}. By enabling the compiler to use an approximate decomposition for the Toffoli gates (which uses 3 CNOTs instead of 6), the overall resource cost is significantly reduced. For the 6-control case depicted, this optimization reduces the CNOT count from 54 to 30. As the number of controls increases, this can represent up to a 50\% CNOT reduction.

\begin{figure}[htbp]
    \centering
    \includegraphics[width=\linewidth]{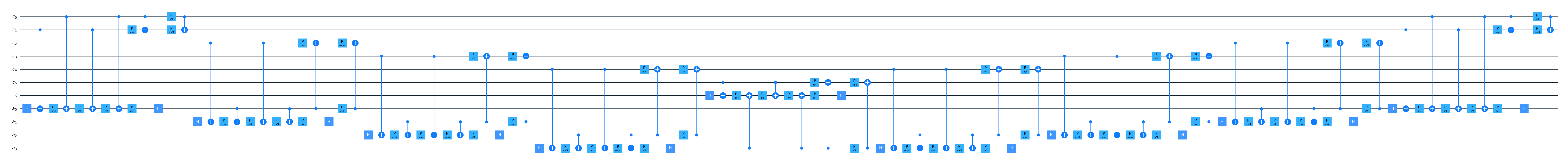}
    \vspace{5mm}
    \includegraphics[width=\linewidth]{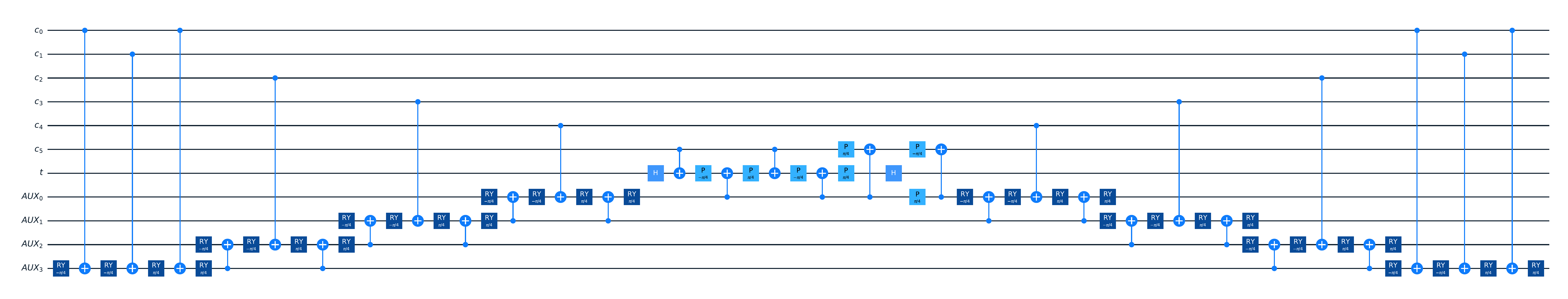}
    \caption{Resulting quantum circuits for a 6-control NOT gate using the V-chain decomposition. \textsc{Top}: The circuit generated from the explicit \texttt{v\_chain\_x} implementation. \textsc{Bottom}: The more efficient circuit generated from the \texttt{v\_chain} implementation, which leverages compiler optimizations enabled by \texttt{with around} to use approximate Toffoli.}
    \label{fig:vchain_circuit}
\end{figure}

\section{Final Remarks}\label{sec:conclusion}

In this paper, we leveraged the high-level \texttt{with around} instruction from the Ket platform, which implements the common $U^\dagger V U$ pattern, to enable compiler optimizations. We presented two primary contributions. First, we developed a theorem demonstrating how this instruction allows a compiler to automatically substitute gates with more efficient, approximate decompositions, leading to circuits with fewer CNOT gates. Second, we introduced a construction that uses the same instruction to guarantee the safe use and correct uncomputation of auxiliary qubits, simplifying dynamic memory management for quantum programming.

The implications of this work suggest a paradigm shift in how we approach high-performance quantum programming. This trend parallels the evolution of classical computing, where compilers for high-level languages eventually surpassed the performance of most hand-written assembly code. We argue that a similar trajectory is emerging in quantum computing; high-level instructions, far from being mere conveniences, can provide compilers with crucial semantic information, enabling them to explore a broader optimization space than is feasible through manual, low-level tuning. We demonstrated this concretely by implementing a V-chain decomposition for a multi-controlled-NOT gate, which, through our optimizations, can achieve up to a 50\% reduction in CNOTs. While the underlying decomposition methods are known, our work shows how they can be automatically applied by the compiler, thanks to the high-level abstraction.

The strategy proposed for verifying the safe use of auxiliary qubits are intentionally pessimistic to guarantee correctness, which opens opportunities for future research. A key direction is to develop more sophisticated static analysis techniques to identify a broader range of valid constructions. Furthermore, determining if an arbitrary unitary is a permutation or diagonal gate is not always trivial. For example, the QFT-based adder~\cite{ruiz-perezQuantumArithmeticQuantum2017} is a permutation gate, but this property is not easily identified using the definition provided in Section~\ref{subsec:opt:decomp}.

Further work could also extend these safety guarantees to the use of dirty auxiliary qubits, which are initialized in an unknown state. Finally, the dynamic allocation of auxiliary qubits creates new optimization opportunities in the circuit mapping stage, where the compiler has greater freedom to assign roles to physical qubits with favorable connectivity or lower error rates. Investigating these mapping strategies is a promising direction for future work.

\begin{credits}
    \subsubsection{\ackname} ECRR acknowledges the Coordenação de Aperfeiçoamento de Pes\-so\-al de Ní\-vel Superior - CAPES, Finance Code 001; EID, JM, and ECRR acknowledges the Conselho Nacional de De\-sen\-vol\-vi\-men\-to Ci\-en\-tí\-fi\-co e Tec\-no\-ló\-gi\-co - CNPq through grant number \mbox{409673/2022-6}; JM, EID, and ECRR acknowledges the Fundação de Amparo à Pesquisa e I\-no\-va\-ção do Estado de Santa Catarina - FAPESC through Project FAPESC TR nº 2024TR002672.
\end{credits}

\bibliographystyle{splncs04}
\bibliography{main}

\end{document}